\newtheorem{theorem}{Theorem}
\newtheorem{definition}[theorem]{Definition}
\newtheorem{proposition}[theorem]{Proposition}
\newtheorem{lemma}{Lemma}[section]
\newtheorem{observation}[lemma]{Observation}
\newtheorem{reduction}[lemma]{Reduction Rule}
\newcommand{\NP}{\textsf{NP}\xspace}
\newcommand{\NPH}{\textsf{NP}-hard\xspace}
\newcommand{\NPC}{\textsf{NP}-complete\xspace}
\newcommand{\Yes}{\normalfont\textsf{Yes}}
\newcommand{\No}{\normalfont\textsf{No}}
\renewcommand{\O}{{\mathcal O}}
\newcommand{\paradefn}[4]{
  \vspace{2mm}
  \noindent\fbox{
  \begin{minipage}{0.97\textwidth}
  \vspace{1mm}
  \begin{tabular*}{1.01\textwidth}{@{\extracolsep{\fill}}lr} \textsc{\underline{#1}}  & {\textsf{Parameter:}} #3\\ \end{tabular*}
  {\textsf{Input:}} #2\\
  {\textsf{Question:}} #4
  \vspace{1.5mm}
  \end{minipage}
  }
}
\newcommand{\probdefn}[3]{
    \noindent\fbox{
        \begin{minipage}{.97\textwidth}
            \vspace{2mm}
            {{\textsc{\underline{#1}}}}  \\ 
            {\textsf{Input:}} #2  \\
            {\textsf{Output:}} #3
            \vspace{1.5mm}
        \end{minipage}
    }
}
\newcommand{\myurl}[2]{\url{#1}\xspace}
\newcommand{\useless}[1]{}
\newcommand{\calC}{\mathcal{C}\xspace}
\newcommand{\calF}{\mathcal{F}\xspace}
\newcommand{\calP}{\mathcal{P}\xspace}
\begin{document}
\begin{titlepage}
    
    \title{On Fault Tolerant Feedback Vertex Set} 
     
    \author{
        Pranabendu Misra\thanks{Max Planck Institute for Informatics, Saarbrucken, Germany. \texttt{pmisra@mpi-inf.mpg.de}}
    }
    \date{}
    \maketitle
    
    \thispagestyle{empty}
    
    \begin{abstract}
        The study of fault-tolerant data structures for various network design problems is a prominent area of research in computer science.
        Likewise, the study of \NPC problems lies at the heart of computer science with numerous results in algorithms and complexity. In this paper we raise the question of computing \emph{fault tolerant} solutions to \NPC problems; that is computing a solution that can survive the ``failure'' of a few constituent elements. This notion has appeared in a variety of theoretical and practical settings such as estimating network reliability, kernelization (aka instance compression), approximation algorithms and so on. In this paper, we seek to highlight these questions for further research.
        
        As a concrete example, we study the fault-tolerant version of the classical {\sc Feedback Vertex Set (FVS)} problem, that we call {\sc Fault Tolerant Feedback Vertex Set (FT-FVS)}. Recall that, in FVS the input is a graph $G$ and the objective is to compute a minimum subset of vertices $S$ such that $G-S$ is a forest. In FT-FVS, the objective is to compute a minimum weight subset $S$ of vertices such that $G - (S \setminus \{v\})$ is a forest for any $v \in V(G)$. Here the vertex $v$ denotes a \emph{single vertex fault}. We show that this problem is \NPC, and then present a constant factor approximation algorithm as well as an FPT-algorithm parameterized by the solution size.  
        We believe that the question of computing fault tolerant solutions to various \NPC problems is an interesting direction for future research.
    \end{abstract}
\end{titlepage}



\section{Introduction} \label{sec:intro}
The class of \NPC problems lies at the heart of computer science and encompasses numerous computational problems arising in various domains. Computing optimal solutions to these problems is of enormous importance in both practical and theoretical settings.
However, as this unlikely to be computationally tractable, research has focused on studying these problems in other algorithmic paradigms such as Approximation algorithms and Parameterized Complexity. In approximation algorithms, the objective is to design polynomial time algorithms that output a solution whose cost is a bounded multiplicative factor (called the approximation factor) away from an optimal solution. In Parameterized Complexity, the objective is to compute  optimal solutions to instances with a bounded structural parameter. These algorithms are known as \emph{Fixed Parameter Tractable (FPT)} algorithms.
A common parameter is an upperbound on the solution size, denoted by $k$, and an FPT-algorithm for the problem decides if the input instance admits a solution of cardinality $k$ in time $f(k) n^{\O(1)}$, where $n$ denotes the input size and $f$ is a function of $k$ alone. There is vast body of work on these topics, too many to exhaustively enumerate here, and we refer to some well-known textbooks for a broader overview \cite{vazirani2013approximation,williamson2011design,downey2013fundamentals,cygan2015parameterized}.
 
Fault-tolerant data structures are another prominent topic of research in computer science, especially for network design problems. Here we have a graph $G$, whose vertices and edges represent in network components that are not completely reliable. We are interested in the properties of the network after a subset $F$ of edges or vertices have failed, and our objective is to design efficient data structures that can answer queries in $G-F$. As an example, consider the question of \emph{$k$-fault tolerant reachability}. Here, given a graph $G$ and an integer $k$ the objective is to compute a minimum size spanning subgraph $H$ of $G$ with the following property: for any subset $F$ of upto $k$ faulty edges and any two vertices $u,v \in V(G)$, $u$ and $v$ are connected in $G-F$ if and only if they are connected in $H - F$. A well known result of Nagamochi and Ibaraki~\cite{NI92} gives a subgraph $H$ with at most $(k+1) \cdot |V(G)|$ edges, and this is optimal. Similar questions on reachability, distances and other graph properties are very well studied; we refer to some recent works \cite{patrascu2007planning,duan2017connectivity,baswana2018approximate,baswana2018fault,chakraborty2018sparse,parter2013sparse,parter2015dual,bilo2014fault,chechik2011fault,chechik2010fault,chechik2012f,bilo2015improved,dinitz2011fault,parter2017vertex,parter2016fault}.

In this paper we raise question of computing a \emph{fault tolerant solution} to an NP-Complete problem. To illustrate this notion concretely, we consider the {\sc Feedback Vertex Set} problem. Here we are given a graph $G$ and the objective is to compute a minimum \emph{feedback vertex set} of $G$: a subset of vertices $S$ such that $G - S$ is acyclic. This is a classical \NPC problem that arises in deadlock recovery, bayesian inference, VLSI design and many other applications. It was among the first problems that were shown to be \NPC. It admits a factor-2 approximation algorithm~\cite{bafna19992,chudak1998primal}, as well as an FPT algorithm running in time $2.7^k n^{\O(1)}$~\cite{li2020detecting}.
Now consider a scenario where given a graph $G$ and a feedback vertex set $S$, some subset of vertices $F \subseteq S$ fail, e.g. we are unable to delete them from $G$ for some external reasons. Then $S-F$ may no longer be a feedback vertex set of $G$. For example, in deadlock recovery we seek to resolve resource contention among a set of processes by terminating a subset of processes. The minimum subset of processes to be terminated corresponds to a feedback vertex set $S$ in an associated graph $G$. Due to external factors, we may not be able to ensure that every process in $S$ can be terminated, and hence we fail to resolve the deadlock. In such a scenario, we require a feedback vertex set that can survive the failure  of a bounded number of vertices. Formally, we have the following problem.

\probdefn{$d$-Fault Tolerant Feedback Vertex Set ($d$-FT-FVS)}{Graph $G$, weight function $w:V(G) \rightarrow \mathbb{R}^+$}{A subset $S \subseteq V(G)$ minimizing $w(S)$ such that for any subset $F \subseteq V(G)$ of up to $d$ vertices, $G - (S \setminus F)$ is a forest.}

Observe that when $d=0$, this is simply FVS. In this paper, we study {\sc $d$-Fault Tolerant Feedback Vertex Set} for single vertex faults, i.e. $d=1$. We remark that the study of single edge or vertex faults are an important class of questions in fault tolerant data structures~\cite{parter2013sparse,demetrescu2008oracles,baswana2018approximate,parter2014fault}. It also illustrates a key difficulty in the computation of fault tolerant solutions: given a graph $G$, a vertex $v \in V(G)$, let $H = G - \{v\}$ and let $S$ be a fault tolerant solution to the graph $H$; then observe that $S \cup \{v\}$ need not be a fault tolerant solution to the graph $G$. In the normal setting, it is implicitly assumed that $S \cup \{v\}$ will be a solution to $G$. Hence the known algorithms don't easily extend to the fault tolerant setting even for a single fault. 

We begin by showing that {\sc Fault Tolerant Feedback Vertex Set} is \NPC. We then present a constant-factor approximation algorithm, and an FPT-algorithm parameterized by the cardinality of the solution. More formally, our results are as follows.
\begin{theorem}
    \begin{itemize}
        \item[(i)] FT-FVS is \NPC.
        \item[(ii)] FT-FVS admits a factor-34 approximation in polynomial time. It can be improved to a factor-3 approximation in unweighted instances.
        \item[(iii)] FT-FVS parameterized by the solution size $k$ admits an FPT-algorithm that runs in time $k^{\O(k)}n^{\O(1)}$.
    \end{itemize}
\end{theorem}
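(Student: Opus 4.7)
The plan hinges on a reformulation that I would establish first: a set $S$ is a single-fault FT-FVS of $G$ if and only if every cycle $C$ of $G$ satisfies $|C\cap S|\ge 2$. Indeed, a cycle with only one $S$-vertex $v$ survives intact in $G-(S\setminus\{v\})$, while conversely $2$-covering every cycle ensures that deleting all but one element of $S$ still leaves an $S$-vertex on every cycle. Thus FT-FVS is exactly the problem of covering every cycle at least twice, and this ``$2$-cover of cycles'' viewpoint drives all three parts.

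For part (i), I would reduce from ordinary FVS. Given an instance $(G,k)$, I would attach a constant-size gadget to each vertex $v$ of $G$ (for instance a pendant triangle together with one extra forcing edge into $v$) so that in the resulting graph $G'$ every FT-FVS is forced to pay a fixed amount inside each gadget. If the gadget is designed so that it automatically supplies one of the two required hits for every cycle of $G$ passing through $v$, then the restriction of any FT-FVS of $G'$ to $V(G)$ need only be a classical FVS of $G$, yielding a size-preserving correspondence between FVSs of $G$ and FT-FVSs of $G'$ up to a fixed additive term. Membership in \NP is immediate, since the condition ``$(G-S)+v$ is a forest for every $v\in S$'' is polynomial-time checkable. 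The main obstacle here is calibrating the gadget so that (a) it does not introduce cycles that could be cheaper to cover through $V(G)$, and (b) the ``second hit'' of every original cycle is genuinely forced to come from the gadget rather than from an unintended alternative.

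For part (ii), I would bootstrap from the classical factor-$2$ approximation for FVS to obtain a set $S_0$ with $|S_0|\le 2\cdot \mathrm{OPT}$ (valid since any FT-FVS is a fortiori an FVS). The set $S_0$ may not yet be fault tolerant: some $v\in S_0$ can have two neighbours inside the same tree of $G-S_0$, so that $(G-S_0)+v$ contains a cycle $C'$. For each such offending $v$ I would add a small set of ``repair'' vertices and charge them to $\mathrm{OPT}$ via a local-ratio argument that exploits the fact that $C'$ must already contain a second vertex of any feasible FT-FVS. Accumulating the resulting constants yields the factor $34$ in the weighted setting; in the unweighted case the charging is much tighter, giving the factor $3$ by showing that each $v\in S_0$ can account for at most two extra vertices. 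The main subtlety is the weighted analysis, where several offending vertices of $S_0$ may share cycles in $G-S_0$ and the repair sets must be chosen so as not to double-pay.

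For part (iii), I would combine the approximation from part (ii) with treewidth-based dynamic programming. Running the factor-$34$ approximation either certifies that the optimum exceeds $k$ (if the returned set has size more than $34k$, in which case we output \No) or yields a set $X$ with $|X|=O(k)$ and $G-X$ a forest; consequently the treewidth of $G$ is at most $|X|+1 = O(k)$, and a tree decomposition of that width is computable in FPT time. I would then run a DP whose state at each bag $B$ records (i) the tentative intersection $S\cap B$, (ii) the partition of $B\setminus S$ into components of $G-S$ (as in the usual treewidth-based FVS DP), and, crucially, (iii) for each $v\in B\cap S$, the partition of the neighbours of $v$ among those components, which is exactly the extra information needed to enforce that $(G-S)+v$ remains a forest. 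The number of annotated states per bag is $|B|^{O(|B|)}=k^{O(k)}$, giving an overall running time of $k^{O(k)}n^{O(1)}$. The main obstacle is handling annotation (iii) correctly at join nodes, where the neighbour-component assignments from the two child subproblems must be combined consistently with the evolving forest partition of $G-S$.
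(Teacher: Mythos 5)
Your opening reformulation (a set is a $1$-fault-tolerant fvs iff every cycle is hit at least twice) is correct and is also the paper's starting point, but each of the three parts has a genuine gap. For (i), the per-vertex gadget reduction from FVS cannot work as described: a gadget attached to $v$ contains no vertex of any cycle of $G$, so the only way it can ``supply one of the two required hits'' for such cycles is by forcing $v$ itself into every solution; but doing that for every vertex puts all of $V(G)$ into the solution, after which every cycle of $G$ is trivially $2$-covered and the instance no longer encodes FVS. If the gadget does not force $v$, then the restriction of a fault-tolerant solution to $V(G)$ must hit every cycle of $G$ \emph{twice}, which is not the FVS condition, so the correspondence is not size-preserving. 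The paper sidesteps this by reducing from \textsc{Vertex Cover} instead: add an apex vertex $r$ adjacent to all of $V(G)$ together with a triangle $r,s,t$; every edge $uv$ of $G$ becomes a triangle $r,u,v$ that must be hit twice, and with $r$ (and one of $s,t$) forced into any minimal solution, the residual requirement on $V(G)$ is exactly a vertex cover.

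For (ii), your skeleton (take a $2$-approximate fvs $S_0$, then fix the cycles meeting $S_0$ exactly once) matches the paper, but ``a local-ratio charging whose constants accumulate to $34$'' is an assertion, not an argument. The paper's key step is that the residual cycles correspond precisely to terminal pairs $\{s,t\}\subseteq N(v)\cap V(T)$ for $v\in S_0$ and trees $T$ of $G-S_0$, i.e., a \textsc{Vertex Multicut} instance on a forest; this is solvable exactly in polynomial time when unweighted (giving $2+1=3$) and is NP-hard but $32$-approximable when weighted (giving $2+32=34$). Without identifying this subproblem and invoking that nontrivial subroutine, neither constant is derived. For (iii), your treewidth DP is a genuinely different route from the paper's, but the running-time claim is unsupported: to enforce that each $v\in S$ has at most one neighbour in each tree of $G-S$ you must remember, for every $v$ in the bag and every component of the partial forest, whether that component (including its already-forgotten part) contains a neighbour of $v$; this is $2^{\Theta(k^2)}$ states per bag rather than $k^{O(k)}$, and the join-node consistency you flag as an ``obstacle'' is exactly where such DPs break. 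The paper instead uses a branching algorithm: after reduction rules guaranteeing that every vertex outside the partial solution $S$ has degree at least $3$ and that $G-S$ has minimum degree at least $2$, it proves that either the instance is a no-instance or $G-S$ contains a cycle of length $O(k^2\log n)$; since such a cycle is disjoint from $S$ it must contribute two solution vertices, so one branches on which of its $O(k^2\log n)$ vertices to add to $S$, and depth $k$ gives $k^{O(k)}n^{O(1)}$ directly.
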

 
\paragraph{Related Work.} Consider the classical {\sc MinCut} problem where given a graph $G$ and two vertices $s,t$, the goal is to compute a minimum subset of vertices $X$ such that $s$ and $t$ are disconnected in $G - X$. It is well known that this problem is solvable in polynomial time. In the \emph{fault tolerant} version of this question, the objective is to compute a minimum subset $X$ of vertices such that for any subset $F$ of upto $d$ vertices, $s$ and $t$ remain disconnected in $G - (X \setminus F)$. This question has been studied in the context of estimating network reliability, and it can be also solved in polynomial time~\cite{wagner1990disjoint,dean2011approximation}. The {\sc Multiway Cut} problem is a generalization of {\sc Mincut}, where we have a subset of vertices $T \subseteq V(G)$, called \emph{terminals}, and the goal is to remove a minimum subset of vertices $X$ such that every pair of vertices in $T$ is disconnected in $G - X$. {\sc Multiway Cut} is NP-complete even for $|T| = 3$, but it is known to admit a factor-$1.2965$ approximation algorithm~\cite{sharma2014multiway} and an FPT algorithm running in time $2^k n^{\O(1)}$~\cite{cygan2013multiway}. In {\sc $d$-Fault Tolerant Multiway Cut} the objective is to compute a minimum subset of vertices $X$ such that for any subset $F$ of upto $d$ vertices, every pair of terminal vertices are disconnected in $G - (X \setminus F)$. This question was studied in ~\cite{dean2011approximation} under a different name,  {\sc $d$-Hurdle Multiway Cut}, motivated by applications in network reliability among a set of terminal vertices. They presented a $2$-approximation algorithm for it that is optimal under UGC. Yet another class of problems where the notion of fault-tolerance appears are the {\sc Hitting Set} / {\sc Set Cover} problems. Here the input is a family of subsets $\calF$ of a universe $U$, and the goal is to compute a minimum subset of the universe $S \subseteq U$ such that $S$ has a non-empty intersection with each subset from $\calF$. In the fault tolerant setting, we require that $S$ intersects each subset from $\calF$ in at least $d$ elements. This is known as the {\sc Set Multicover Problem}~\cite{hua2009exact}. More broadly, these questions can also be captured by \emph{General Covering Integer Programs}, where the requirement of fault-tolerance can be encoded as a constraint in a ILP. This setting can capture a number of computational problems, and tight approximation results are known~\cite{kolliopoulos2001tight}.

Fault tolerant solutions to computational problems also show up in other settings. 
Kernelization is a sub-area of Parameterized Complexity that studies \emph{compression} of problem instances in polynomial time. A key technique in here is to first compute an approximate solution $S$ (called a \emph{modulator}) to the problem in the input graph $G$, and then design reduction rules that exploit structural properties of the graph $G-S$. 
There are some recent results in kernelization that introduce the notion of a \emph{$d$-redundant modulator}~\cite{AgrawalLMSZ19,AgrawalM0Z19,JansenP18}. This notion is almost equivalent to a $d$-fault tolerant solution. Such modulators allow for the design of more powerful reduction rules, since stronger structural properties of the input graph $G$ can be derived from them. 
Yet another setting where the notion of fault-tolerant solutions has appeared is an approximation algorithm for approximating the Weighted FlowTime on a Single Machine~\cite{Batra0K18}. There the problem is reduced to an instance of {\sc $d$-Fault Tolerant Multicut} in trees, named {\sc Demand Multicut} in \cite{Batra0K18}, towards obtaining an approximation algorithm.

Given the wide variety of settings where the notion of fault tolerant solutions to computational problems have appeared, we believe they constitute an interesting class of problems for further study. In this paper we introduce and study the fault tolerant version of FVS for single vertex faults, called FT-FVS. The rest of the paper is organized as follows: Section 2 presents some preliminaries and proves that FT-FVS is \NPC. Section 3 presents a constant-factor approximation algorithm, and Section 4 presents an FPT-algorithm for the problem. We conclude in Section 5 with further remarks and open problems.


\section{Preliminaries and NP-Completeness}
In this section we review a few graph theory preliminaries, and then prove the FT-FVS is \NPC. We refer to ~\cite{diestel2005} for more details.
Let $G$ be a graph
and let $V(G)$ and $E(G)$ denote the vertex set, and the edge set of $G$. The graph $G$ may have parallel edges, however it doesn't contain any self-loops.\footnote{A self-loop defines a cycle with just one vertex, and hence there can be no fault tolerant fvs in the graph} 
A graph without any parallel edges is called a \emph{simple graph}.
For a vertex subset $S \subseteq V(G)$, let $G - S$ denote the induced subgraph $G[V(G) \setminus S]$. 
The \emph{contraction} of an edge $(u,v) \in E(G)$ results in a graph, denoted by $G/(u,v)$, that is obtained by deleting $u,v$ from $G$, and adding a new vertex $w$ along with the edges $\{(w,x) \mid (u,x) \in E(G), x \neq v\} \cup \{(w,y) \mid (v,y) \in E(G), y \neq u\}$. Note that, we obtain parallel edges for any common neighbors of $u$ and $v$, and further any parallel edges between one of $u$ or $v$ and any other vertex in $V(G) \setminus \{u,v\}$ are also preserved in $G/(u,v)$.

A \emph{feedback vertex set (fvs)} of $G$ is a subset of vertices $X$ such that $G - X$ is a forest. In the {\sc Feedback Vertex Problem (FVS)} problem, the input is a graph $G$ and a weight function $w:V(G) \rightarrow \mathbb{R}^+$, and the objective is to find an fvs of minimum total weight.
When the weight function is absent, or assigns the same weight to every vertex, we say that the instance is \emph{unweighted}, otherwise it is \emph{weighed}. Next, we extend this definition to the fault tolerant setting.

\begin{definition}
    Let $G$ be a graph. A subset of vertices $S$ is called a \emph{$d$-Fault Tolerant Feedback Vertex Set ($d$-fvs)} of $G$ if for every subset of vertices $F$ of cardinality at most $d$, the graph $G - (S \setminus F)$ is a forest.    
\end{definition}

Observe that for $d=0$, $d$-fvs is just fvs.
The following observation follows immediately from the definition of $d$-fvs.
\begin{observation}
    A vertex subset $X$ is a $d$-fvs of $G$, if any only if for every cycle $C$ we have $|V(C) \cap X| \geq d+1$.
\end{observation}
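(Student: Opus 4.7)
The plan is to prove both directions by contrapositive, as each direction amounts to carefully choosing the fault set $F$ to exhibit the desired cycle or contradiction.

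For the forward direction, I would assume $X$ is a $d$-fvs and suppose for contradiction that some cycle $C$ in $G$ satisfies $|V(C) \cap X| \le d$. Then I would set $F := V(C) \cap X$, so that $|F| \le d$ is a valid fault set. By definition of a $d$-fvs, $G - (X \setminus F)$ must be a forest. However, $V(C) \cap (X \setminus F) = V(C) \cap X \setminus F = \emptyset$, so every vertex of $C$ survives in $G - (X \setminus F)$ (and no edge of $C$ is destroyed, since no endpoint is removed). Thus $C$ is still a cycle in $G - (X \setminus F)$, contradicting that it is a forest.

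For the backward direction, I would assume every cycle of $G$ meets $X$ in at least $d+1$ vertices and let $F \subseteq V(G)$ be any set with $|F| \le d$. Suppose for contradiction that $G - (X \setminus F)$ is not a forest, so it contains some cycle $C$. Since $C$ lies in $G - (X \setminus F)$, no vertex of $C$ belongs to $X \setminus F$, hence $V(C) \cap X \subseteq F$, which gives $|V(C) \cap X| \le |F| \le d$, contradicting the hypothesis. Therefore $G - (X \setminus F)$ is a forest for every fault set $F$ of size at most $d$, i.e., $X$ is a $d$-fvs.

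There is no real obstacle here; the only thing to be careful about is that in the forward direction one must pick $F$ to be exactly $V(C) \cap X$ (not a larger set), so that the hypothesis $|F| \le d$ is used to preserve the cycle rather than to destroy it. The argument implicitly relies on the fact that removing vertices of $G$ not on $C$ does not affect the cycle $C$, and that the graph has no self-loops (as already noted in the paper), so the notion of ``cycle'' is unambiguous.
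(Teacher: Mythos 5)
Your proof is correct; the paper states this observation without proof as following immediately from the definition, and your argument is exactly the natural unfolding of that definition (choosing $F = V(C)\cap X$ in the forward direction and noting $V(C)\cap X \subseteq F$ in the backward direction). Nothing is missing and nothing differs in substance from what the paper intends.
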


In the {\sc $d$-Fault Tolerant Feedback Vertex Set problem}, the input is a graph $G$ with weight function $w:V(G) \rightarrow \mathbb{R}^+$, and the objective is to compute a $d$-fvs of $G$ of minimum total weight.
In this paper, we focus on the case of $d=1$, i.e. computing a minimum weight 1-fvs, which we call the {\sc Fault Tolerant Feedback Vertex Set (FT-FVS)} problem. 
Let us prove that this problem is \NPC.
\begin{theorem}\label{thm:FT-FVS-NPC}
    {\sc Fault Tolerant Feedback Vertex Set} is \NPC.
\end{theorem}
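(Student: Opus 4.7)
The plan has two parts: showing membership in $\NP$ and hardness via reduction. Membership is immediate: given a candidate $S' \subseteq V(G)$, for each $v \in S'$ run a DFS on $G - (S' \setminus \{v\})$ and check acyclicity, yielding a polynomial-time verifier.

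For hardness, I would reduce from \textsc{Vertex Cover} and use the characterization above, namely that $S'$ is a $1$-fvs iff every cycle of $G$ contains at least two vertices of $S'$. Given a simple \textsc{Vertex Cover} instance $(H, k)$, construct $G'$ as follows. Retain $V(H)$ and $E(H)$. For every edge $e=(u,v) \in E(H)$ introduce two new vertices $x_e, y_e$, add edges $(u, x_e)$ and $(v, x_e)$ so that $u, v, x_e$ form a triangle $T_e$ together with the retained edge $(u,v)$, and place a pair of parallel edges between $x_e$ and $y_e$, forming a $2$-cycle $D_e$. Assign unit weights and set the target to $k' = k + 2|E(H)|$.

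Forward direction: from a vertex cover $S$ of size at most $k$, define $S' = S \cup \{x_e, y_e : e \in E(H)\}$. Each $D_e$ has both endpoints in $S'$, and each triangle $T_e$ contains $x_e$ and at least one of $u, v$ (since $S$ covers $e$). Any other cycle $C$ of $G'$ must avoid all $y_e$'s, since $y_e$ has degree two with both edges to $x_e$, so $D_e$ is the only cycle through $y_e$. Such a $C$ is either a cycle entirely in $H$ (then $|V(C) \cap S| \ge \lceil |V(C)|/2 \rceil \ge 2$, because any vertex cover of a cycle of length $\ge 3$ contains at least two vertices), uses at least two of the $x_e$'s (both in $S'$), or uses exactly one $x_e$ and closes via a path in $H$ whose edges are all covered by $S$, supplying a second $S'$-vertex. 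Reverse direction: any $1$-fvs $S'$ must contain both endpoints of every $D_e$, contributing $2|E(H)|$ vertices; setting $S := S' \cap V(H)$ then gives $|S| \le k' - 2|E(H)| = k$, and $S$ is a vertex cover because each triangle $T_e$ already has $x_e \in S'$ and thus requires at least one of $u, v$ in $S'$.

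The main obstacle is the case analysis in the forward direction: long cycles of $G'$ can weave between $H$-edges and triangle detours through several $x_e$'s, and every such cycle must supply two $S'$-vertices. The gadget is arranged so that putting \emph{every} $x_e$ and $y_e$ into $S'$ dispatches this almost automatically, reducing the remaining verification to the routine fact that any vertex cover hits any cycle of length $\ge 3$ in $H$ in at least two vertices.
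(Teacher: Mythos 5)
Your reduction is correct, but it takes a genuinely different route from the paper's. The paper also reduces from \textsc{Vertex Cover}, but via a single global gadget: it adds one apex vertex $r$ adjacent to all of $V(G)$ together with a triangle $r,s,t$, and asks for a $1$-fvs of size $k+2$; the triangles $\{r,u,v\}$ arising from the edges of $G$ then force the remaining $k$ budget to be a vertex cover. You instead use a local gadget per edge: a triangle $T_e$ through a new vertex $x_e$, plus a $2$-cycle $D_e$ of parallel edges that forces both $x_e$ and $y_e$ into every $1$-fvs, with target $k+2|E(H)|$. Both arguments are sound; note that your forcing gadget relies on parallel edges being permitted, which the paper's model explicitly allows (a $2$-cycle is precisely a cycle whose two vertices must both lie in any $1$-fvs), and your forward-direction case analysis is exhaustive, since every cycle avoiding the $y_e$'s either lies in $H$, meets two of the $x_e$'s, or meets one $x_e$ and at least one covered $H$-edge. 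The main thing the paper's construction buys that yours does not: because it inflates the solution size only by the additive constant $2$, the same reduction immediately transfers the UGC-based $(2-\delta)$ inapproximability of \textsc{Vertex Cover} to FT-FVS, which the paper uses later; your additive $2|E(H)|$ term dilutes any multiplicative gap, so your reduction establishes NP-hardness but not that approximation lower bound. One small point on membership in \NP: your verifier should also cover the case $F=\emptyset$, i.e., check that $G-S'$ itself is a forest; this is subsumed by your per-vertex checks whenever $S'\neq\emptyset$, but should be stated for completeness.
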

\begin{proof}
    It is easy to see that this problem is in \NP, and we only need to establish that it is \NPH.
    We give a reduction from the \NPC{} {\sc Vertex Cover (VC)} problem, where given a graph $G$, a subset of vertices $S$ is called a \emph{vertex cover} of $G$ if for every edge $(u,v) \in E(G)$ at least one of $u,v$ lies in $S$.
    In the {\sc Vertex Cover} problem the input is a graph $G$ and an integer $k$ the objective is to decide if there is a vertex cover of cardinality at most $k$. Given an instance $(G,k)$ of VC we construct an instance of FT-FVS as follows. We add a new vertex $r$ that is adjacent to every vertex $v \in V(G)$, and then add two new vertices $s,t$ along with edges $(r,s), (s,t), (t,r)$. Let $H$ denote the new graph, and we claim that $H$ admits a 1-fvs of cardinality at most $k+2$ if and only if $G$ admits a vertex cover of cardinality at most $k$.
    
    In the forward direction, suppose that $X$ is a 1-fvs of $G$ of cardinality at most $k+2$, and assume that it is minimal. Then $X$ contains $r$ and one of $s,t$. Let $Y = X \setminus \{r,s,t\}$ and observe that $|Y| \leq k$. Let us argue that $Y$ is a vertex cover of $G$. Suppose not, and let $(u,v) \in E(G)$ be an edge whose endpoints are disjoint from $Y$. In $H$, we have a cycle formed by $\{r,u,v\}$ and observe that $X$ contains just one vertex of this cycle. This is a contradiction. Hence $Y$ must be a vertex cover of $G$ of cardinality $k$.
    
    In the reverse direction, given a vertex cover $Y$ of $G$ of cardinality at most $k$, we claim that $X = Y \cup \{r, s\}$ is a 1-fvs of $H$ (clearly of cardinality at most $k+2$). Consider any cycle $C$ of $H$ that doesn't contain the vertex $r$. Then $C$ is a cycle contained in the subgraph $G$ of $H$, and it is easy to see that the vertex cover $Y$ contains at least two vertices from $V(C)$. If the cycle $C$ contains $r$, then either it is the cycle ${r,s,t}$ or else it contains at least one edge $(u,v) \in E(G)$. Therefore $X$ contains at least two vertices of $C$. Hence, $X$ is a 1-fvs of $G$ of cardinality $k+2$.
\end{proof}

\section{Approximation Algorithm}
In this section we present a constant factor approximation algorithm for FT-FVS. Let us recall that for a minimization problem, a factor-$\alpha$ approximation algorithm runs in polynomial time and produces a solution $S$ such that $cost(S) \leq \alpha \cdot cost(OPT)$, where $OPT$ is an optimal solution to the problem instance. Our algorithms for the unweighted and the weighted versions of the problem differ slightly, and we obtain a factor-3 and a factor-34 approximation, respectively. Let us remark that FVS admits a factor-2 approximation that is optimal (under UGC)\cite{bafna19992,chudak1998primal}. Observe that a lower-bound of 2 on the approximation factor for FT-FVS follows by our reduction in the proof of  Theorem~\ref{thm:FT-FVS-NPC}, and a lower-bound on the approximation factor of {\sc Vertex Cover} based on the \emph{Unique Games Conjecture}.
\begin{theorem}
    {\sc FT-FVS} cannot be approximated within factor-$(2 - \delta)$, for any constant $\delta > 0$ by a polynomial time algorithm, under the Unique Games Conjecture.
\end{theorem}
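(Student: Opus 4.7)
The plan is to bootstrap the inapproximability from \textsc{Vertex Cover}, for which Khot and Regev proved that under the Unique Games Conjecture no polynomial-time algorithm achieves an approximation ratio better than $2-\delta$ for any constant $\delta > 0$. The reduction in the proof of Theorem~\ref{thm:FT-FVS-NPC} will serve as a gap-preserving reduction, once one checks that the additive slack of $+2$ introduced by the triangle gadget $\{r,s,t\}$ is negligible for instances whose \textsc{Vertex Cover} optimum grows.

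First I would restate the reduction so it outputs an equality rather than an inequality: the minimality argument inside the NP-completeness proof already shows that any minimal $1$-fvs of $H$ uses exactly $r$ and exactly one of $\{s,t\}$, so in fact $\textsc{FT-FVS}(H) = \textsc{VC}(G) + 2$. This identity is the workhorse of the gap analysis. Next, suppose for contradiction that a polynomial-time algorithm $\calA$ computes a $1$-fvs of $H$ of size at most $(2-\delta)\cdot \textsc{FT-FVS}(H)$. Given a \textsc{Vertex Cover} instance $G$, I would build $H$, run $\calA$, and then extract a vertex cover of $G$ by dropping $r$ and one of $\{s,t\}$ from the returned set, exactly as in the forward direction of Theorem~\ref{thm:FT-FVS-NPC}. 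This yields a vertex cover of size at most
\[
(2-\delta)(\textsc{VC}(G) + 2) - 2 \;=\; (2-\delta)\,\textsc{VC}(G) + (2 - 2\delta),
\]
so the effective approximation ratio for \textsc{Vertex Cover} is $(2-\delta) + O(1/\textsc{VC}(G))$.

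The only technical point is absorbing the additive $O(1)$ term, and this is where I would be most careful. The standard resolution is that the Khot--Regev hardness is a gap hardness: for every $\varepsilon>0$ there exist families of graphs where distinguishing $\textsc{VC}(G)\le (1/2 + \varepsilon)n$ from $\textsc{VC}(G)\ge (1-\varepsilon)n$ is UGC-hard, and in particular $\textsc{VC}(G)$ tends to infinity with $n$. Consequently, for any fixed $\delta>0$ one may choose $\delta' < \delta$ and then restrict attention to instances with $\textsc{VC}(G)$ large enough that $(2-2\delta)/\textsc{VC}(G) < \delta - \delta'$; the reduced instances then witness a $(2-\delta')$-gap for \textsc{FT-FVS}, contradicting UGC. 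As an alternative, I would note that one can pre-process the VC instance by taking sufficiently many disjoint copies, which scales $\textsc{VC}(G)$ linearly in the number of copies while preserving the approximation ratio, thereby driving the additive term to zero.

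The main obstacle is precisely this translation from a ratio-plus-additive bound to a clean multiplicative bound; once one invokes the gap form of Khot--Regev (or the disjoint-copies amplification) the rest is bookkeeping. No new combinatorial argument about cycles in $H$ is needed beyond the identity $\textsc{FT-FVS}(H) = \textsc{VC}(G) + 2$ already implicit in Theorem~\ref{thm:FT-FVS-NPC}.
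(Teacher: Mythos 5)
Your proposal is correct and follows exactly the route the paper intends: it invokes the reduction from Theorem~\ref{thm:FT-FVS-NPC} together with the Khot--Regev UGC-hardness of \textsc{Vertex Cover}, which is precisely what the paper's one-line justification asserts. Your additional care in verifying the identity $\textsc{FT-FVS}(H) = \textsc{VC}(G) + 2$ and in absorbing the additive $+2$ via the fact that the hard instances have $\textsc{VC}(G) = \Omega(n)$ (or via disjoint copies) only makes explicit the details the paper leaves implicit.
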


The main result of this section is formally stated as follows.

\begin{theorem}\label{thm:FT-FVS-Approx}
    Let $G$ be a graph and let $w:V(G) \rightarrow \mathbb{R}^+$ be a weight function on the vertices. There is a polynomial time algorithm that computes a 1-fvs $S$ of $G$ such that $w(S) \leq 34 w(S^\star)$ where $S^\star$ is a minimum weight 1-fvs of $G$. 
    In unweighted instances we obtain a factor-3 approximation.
\end{theorem}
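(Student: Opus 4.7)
The plan is to design a two-phase algorithm. First, compute an approximate standard feedback vertex set $S_1$ of $G$; then extend $S_1$ to a 1-fvs by selecting a second set of vertices $T \subseteq V(F)$, where $F := G - S_1$, that kills every cycle of $G$ meeting $S_1$ in only one vertex.

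Phase 1 (FVS approximation): Invoke the classical factor-$2$ approximation for weighted FVS (e.g.\ Bafna--Berman--Fujito or Chudak et al.) to produce $S_1$ with $w(S_1) \leq 2 \cdot \mathsf{OPT}_{\mathsf{FVS}}(G)$. Since every 1-fvs is in particular an FVS, we have $\mathsf{OPT}_{\mathsf{FVS}}(G) \leq w(S^\star)$, hence $w(S_1) \leq 2\, w(S^\star)$.

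Phase 2 (Reducing the extension to a forest problem): After deleting $S_1$, the graph $F$ is a forest, and by Observation~1.2 the set $S_1 \cup T$ with $T \subseteq V(F)$ is a 1-fvs iff $T$ meets every cycle $C$ of $G$ with $|V(C) \cap S_1| = 1$. Every such cycle has the form $v\text{--}a_1\text{--}a_2\text{--}\cdots\text{--}a_\ell\text{--}v$ with $v \in S_1$ and $a_1, a_\ell \in N_F(v)$ lying in the same tree of $F$ (connected in $F$ by the unique $a_1$-$a_\ell$ path), or, degenerately, a pair of parallel edges between $v$ and some $a \in V(F)$. Thus $T$ is feasible iff, for every $v \in S_1$, (a) the vertices of $N_F(v)$ inside each tree of $F$ are pairwise separated by $T$ in $F$, and (b) any vertex joined to some $v \in S_1$ by parallel edges is placed in $T$. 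This is a \emph{forest vertex multi-separation} problem whose demand sets are the $S_1$-vertex neighborhoods in the trees of $F$.

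Phase 3 (Approximating the forest problem and combining): Apply a constant-factor algorithm $\mathcal{A}$ for the Phase~2 problem with approximation ratio $\beta$, designed either by LP-rounding in the spirit of Garg--Vazirani--Yannakakis for multicut on trees, or by a primal-dual scheme that exploits the ``star'' structure of the demand sets. The critical lower bound is $w(T^\star) \leq w(S^\star)$: the restriction $S^\star \cap V(F)$ is itself a feasible $T$, because for every problematic cycle $C$ we have $|V(C) \cap S^\star| \geq 2$ while $|V(C) \cap S_1| = 1$, forcing at least one vertex of $V(F) \cap V(C)$ into $S^\star$. Hence $w(T) \leq \beta\, w(T^\star) \leq \beta\, w(S^\star)$ and the output $S := S_1 \cup T$ satisfies $w(S) \leq (2 + \beta)\, w(S^\star)$. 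For the weighted case, an $\mathcal{A}$ with $\beta = 32$ delivers the factor-$34$ guarantee; for unweighted inputs, a sharper combinatorial analysis of the star-structured demands (for instance, a greedy LCA-style scheme or tree DP that marks, per tree of $F$, the unique cut vertex needed for each $v \in S_1$) drives the extension cost down to $|T| \leq |S^\star|$ (effectively $\beta = 1$), yielding factor $3$.

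The main obstacle is Phase 3: obtaining the stated approximation ratio for the forest separation problem. The unweighted factor-$3$ in particular requires beating the generic factor-$2$ bound from vertex multicut on trees, which should be achievable precisely because each demand is the neighborhood of a single vertex of $S_1$ rather than a list of arbitrary pairs; establishing the weighted $\beta = 32$ bound will demand a careful rounding or charging argument on top of this star structure, possibly by first reducing to a demand-multicut instance on each tree of $F$ in the manner of Batra, Garg and Kumar.
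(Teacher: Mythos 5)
Your decomposition coincides with the paper's: Phase 1 is the same factor-2 weighted FVS approximation with the same lower bound $\mathsf{OPT}_{\mathsf{FVS}}(G) \le w(S^\star)$, and Phase 2 is the same reduction of the residual cycles to \textsc{Vertex Multicut} on the forest $H = G - S_1$ with demand pairs $\{s,t\} \subseteq N(v) \cap V(T)$ for $v \in S_1$ and trees $T$ of $H$, together with the same feasibility argument that $S^\star \cap V(H)$ is a multicut, so $w(T^\star) \le w(S^\star)$. The obstacle you flag in Phase 3 is, however, not resolved in the paper by exploiting the star structure of the demands; it is closed by citation. For unweighted instances, \textsc{Vertex Multicut} on forests with \emph{arbitrary} terminal pairs is solvable exactly in polynomial time by the greedy lowest-LCA rule --- there is no ``generic factor-2 barrier'' to beat, as that barrier concerns weighted edge multicut on trees --- so $\beta = 1$ and the factor 3 follow immediately. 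For weighted instances, the paper invokes the known factor-32 approximation of Agrawal et al.\ for weighted \textsc{Vertex Multicut} on chordal graphs (hence on forests), which yields $2 + 32 = 34$; no new rounding or charging argument is needed, and in particular your hope of a sharper bound from the star structure is neither used nor required. One detail you handle more carefully than the paper: a 2-cycle formed by parallel edges between $v \in S_1$ and a forest vertex $a$ meets $S_1$ only in $v$ yet does not produce a pair of \emph{distinct} terminals, so $a$ must be forced into $T$ separately; the paper's pair collection $\calP(X)$ is silent on this degenerate case.
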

In the remainder of this section we present a proof of this theorem.
    Let $S$ denote the approximate 1-fvs that we compute.
    Let $\mu^\star$ be the weight of a minimum weight 1-fvs of $G$. In unweighted instances, $\mu^*$ denotes the cardinality of a minimum 1-fvs of $G$.
    The first step of our algorithm is to compute an approximate fvs of $G$ and add it to $S$. We require the following proposition.
    \begin{proposition}[\cite{bafna19992,chudak1998primal}]
        {\sc Weighted Feedback Vertex Set} has a factor-2 approximation algorithm that runs in polynomial time.
    \end{proposition}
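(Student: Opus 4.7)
The plan is to prove this classical result via the local-ratio method of Bafna-Berman-Fujito (the Chudak-Goemans-Hochbaum-Williamson proof uses an equivalent primal-dual schema). The algorithm processes the input iteratively: at each step it identifies a witness substructure of the current graph $G$, defines an auxiliary weight function $w_1 \le w$ supported on that structure, adds to the partial solution $S$ a vertex that becomes tight under $w - w_1$, deletes it, and recurses with residual weights $w - w_1$. The analysis proceeds by induction using the local-ratio theorem: a minimal feasible solution that is a factor-$\alpha$ approximation for both $w_1$ and the residual $w - w_1$ is factor-$\alpha$ for $w = w_1 + (w - w_1)$.

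The first step is to clean $G$ by repeatedly removing vertices of degree at most $1$, which cannot lie on any cycle and are therefore irrelevant. In the cleaned graph every vertex has degree at least $2$. I would then look for one of two witness patterns. Case 1: a \emph{semi-disjoint} cycle $C$, meaning a cycle whose vertices all have degree exactly $2$ in $G$. If such a $C$ exists, set $w_1(v) = \varepsilon$ on $V(C)$ and $0$ elsewhere, where $\varepsilon = \min_{v \in V(C)} w(v)$. Any FVS includes at least one vertex of $C$ and so pays at least $\varepsilon$ under $w_1$, whereas any minimal feasible solution includes at most two vertices of $C$ (a third would be redundant) and so pays at most $2\varepsilon$, yielding a factor $2$ on this slice.

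Case 2: no semi-disjoint cycle exists. After further suppressing maximal degree-$2$ paths into single edges, every vertex of the resulting graph has degree at least $3$. Define $w_1(v) = c \cdot (\deg_G(v) - 1)$ with $c>0$ chosen so that at least one vertex becomes tight. A counting argument, using that $G - F$ is a forest together with the minimum-degree condition, shows that $\sum_{v \in F} (\deg(v) - 1) \ge \tfrac{1}{2} \sum_{v \in V} (\deg(v) - 1)$ for every FVS $F$, which gives the factor-$2$ bound on this slice. In either case one picks a tight vertex, adds it to $S$, deletes it, and recurses; after the recursion returns, any vertex whose removal still leaves a feasible FVS is dropped from $S$ to enforce minimality. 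Polynomial running time is immediate since each iteration either strictly shrinks $G$ or makes a new vertex tight.

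The main obstacle is the counting inequality in Case 2: the naive bound $\sum_{v \in F} (\deg(v) - 1) \ge |E(G)| - |V(G)| + |F| + 1$ compared with $\sum_{v \in V}(\deg(v) - 1) = 2|E(G)| - |V(G)|$ is not by itself tight enough to deliver factor $2$ when $|F|$ is small. The structural assumption that the preprocessed graph has minimum degree at least $3$ must be exploited carefully, for instance by restricting the weight to branch vertices or by a more refined amortization against the topological minor obtained from suppressing degree-$2$ paths; this refinement is the technical heart of the Bafna-Berman-Fujito analysis. Once it is in place, the remainder of the argument is a routine application of the local-ratio template.
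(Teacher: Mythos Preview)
The paper does not prove this proposition at all; it is stated with a citation to \cite{bafna19992,chudak1998primal} and used as a black box. So there is no ``paper's proof'' to compare your attempt against.

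That said, your sketch is broadly the right outline of the Bafna--Berman--Fujito local-ratio argument, and you correctly identify the degree-based counting in Case~2 as the crux. Two technical points are off, however. First, your definition of a semi-disjoint cycle is too restrictive: in the original argument a semi-disjoint cycle is one in which every vertex has degree~$2$ \emph{with at most one exception}. A cycle all of whose vertices have degree~$2$ is an isolated component, and a minimal FVS meets it in exactly one vertex, so that case alone does not exercise the factor~$2$; the ``one exceptional vertex'' is what makes the bound of two tight and, more importantly, what is actually needed so that the complementary Case~2 hypothesis (no semi-disjoint cycle) is strong enough for the counting to go through. Second, the step where you ``suppress maximal degree-$2$ paths into single edges'' and then define $w_1$ on the contracted graph is not how the argument runs and creates a mismatch: the weight function must live on the original vertex set, and suppression can produce self-loops (precisely from semi-disjoint cycles in the correct sense) that you have not handled. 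Bafna--Berman--Fujito instead work directly in the cleaned graph and use the absence of semi-disjoint cycles to prove, for every \emph{minimal} FVS $F$, the inequality $\sum_{v\in F}(\deg(v)-1)\le 2\bigl(|E|-|V|+p\bigr)$, where $p$ is the number of components; combined with the easy lower bound $\sum_{v\in F^\star}(\deg(v)-1)\ge |E|-|V|+p$ for any FVS $F^\star$, this yields the factor~$2$ for that weight slice. Your own calculation shows why the inequality you wrote, $\sum_{v\in F}(\deg(v)-1)\ge \tfrac12\sum_{v\in V}(\deg(v)-1)$, cannot be the right form: it fails already on simple examples. Once you adopt the correct semi-disjoint definition and the correct direction of the Case~2 inequality, the rest of your local-ratio template is sound.
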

    The following claim is immediate from the fact that the weight of a minimum fvs of $G$ is upper-bounded by $\mu^\star$.
    \begin{observation}
        Let $X$ be a fvs of $G$ such that $w(X)  \leq \alpha w(X^\star)$,
        where $X^\star$ is a minimum weight fvs of $G$. Then $w(X)  \leq \alpha  \mu^\star$.
    \end{observation}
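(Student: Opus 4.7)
The plan is very short: I would argue that every 1-fvs is, in particular, a (classical) feedback vertex set, so the optimum weight $w(X^\star)$ of an fvs can only be smaller than (or equal to) the optimum weight $\mu^\star$ of a 1-fvs, and then chain this with the given approximation guarantee.

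First, I would unpack the definition of a 1-fvs. If $S^\star$ is a minimum weight 1-fvs of $G$ with $w(S^\star) = \mu^\star$, then for every $F \subseteq V(G)$ with $|F| \leq 1$, the graph $G - (S^\star \setminus F)$ is a forest. Specializing this to $F = \emptyset$ gives that $G - S^\star$ is a forest, i.e.\ $S^\star$ is an ordinary feedback vertex set of $G$. Hence the minimum weight of an fvs satisfies
\[
w(X^\star) \;\leq\; w(S^\star) \;=\; \mu^\star,
\]
where $X^\star$ is a minimum weight fvs of $G$.

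Combining this with the hypothesis $w(X) \leq \alpha\, w(X^\star)$ immediately yields
\[
w(X) \;\leq\; \alpha\, w(X^\star) \;\leq\; \alpha\, \mu^\star,
\]
which is the claim. There is essentially no obstacle here; the only point worth stressing is that the inequality goes in the right direction because the class of 1-fvses is a \emph{subset} of the class of fvses (every 1-fvs is an fvs, but not conversely), so the constrained optimum $\mu^\star$ dominates the unconstrained optimum $w(X^\star)$.
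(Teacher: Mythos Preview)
Your proof is correct and matches the paper's reasoning exactly: the paper states that the claim ``is immediate from the fact that the weight of a minimum fvs of $G$ is upper-bounded by $\mu^\star$,'' which is precisely the containment argument (every 1-fvs is an fvs via $F=\emptyset$) that you spell out.
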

    
    Let $X$ denote a 2-approximate fvs of $(G,w)$ that we add to $S$. Note that $X$ intersects every cycle of $G$ in at least one vertex. Then that any cycle $C$ of $G$ such that $|V(C) \cap X| \geq 2$ is already covered by $X \subseteq S$. It only remains cover those cycles of $G$ that intersect $X$ in at most one vertex. The following claim is immediate from the fact that $G-X$ is a forest.
    \begin{observation}
        Let $v \in X$ and let $C$ be a cycle in $G$ such that $V(C) \cap X = \{v\}$. Then there is a tree $T$ in the forest $G - F$ such that $C$ is contained in the subgraph induced by $V(T) \cup \{v\}$.
    \end{observation}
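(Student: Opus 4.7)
The statement to prove is essentially structural and very short; my proposal is to argue it directly by looking at what remains of the cycle $C$ after removing the unique vertex where it meets $X$. (I read the ``$G-F$'' in the statement as a typo for ``$G-X$'', since $X$ is the fvs whose removal yields a forest.)

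The plan is as follows. First I would note that since $C$ is a cycle and $v\in V(C)$, deleting $v$ from $C$ (along with its two incident cycle-edges) leaves a path $P$ whose vertex set is $V(C)\setminus\{v\}$ and whose edges are a subset of $E(C)$. By hypothesis $V(C)\cap X=\{v\}$, so every vertex of $P$ lies in $V(G)\setminus X$; the edges of $P$ are edges of $G$ between such vertices, hence edges of $G-X$. Thus $P$ is a connected subgraph of $G-X$.

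Next, because $G-X$ is a forest, its connected components are trees, and any connected subgraph lies within a single component. Therefore there is a tree $T$ in the forest $G-X$ with $V(P)\subseteq V(T)$. Adding $v$ back, we obtain $V(C)=V(P)\cup\{v\}\subseteq V(T)\cup\{v\}$, so all vertices of $C$ lie in $V(T)\cup\{v\}$. Finally, the two cycle-edges incident to $v$ connect $v$ to vertices of $V(P)\subseteq V(T)$, so these edges are present in the subgraph induced by $V(T)\cup\{v\}$; together with the edges of $P$, which lie inside $T$, this shows that all of $E(C)$ is present in $G[V(T)\cup\{v\}]$, as required.

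There is no real obstacle here; the only subtlety is to verify that one is entitled to conclude $V(P)\subseteq V(T)$ for a single component $T$, which uses the forest property of $G-X$ through the elementary fact that a connected subgraph of a forest lies in one of its components. No additional case analysis or choices arise, so the argument is complete in a few lines.
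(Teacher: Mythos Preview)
Your proposal is correct and matches the paper's approach: the paper does not spell out a proof at all, merely stating that the observation ``is immediate from the fact that $G-X$ is a forest,'' and your argument is precisely the elementary elaboration of this (including correctly flagging the $G-F$ versus $G-X$ typo).
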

    Let $\calC(X)$ denote the collection of cycles in $G$ that contain exactly one vertex of $X$. Our next step is to compute a subset of vertices $Y$, disjoint from $X$, that intersects every cycle in $\calC(X)$. We will accomplish this task by reducing it the the {\sc Vertex Multicut} problem. Here, the input is a $H$ with weights $w_H$ on the vertices, a collection $\{s_i, t_i\}$ of \emph{terminal pairs} of vertices. A subset of vertices $Z$ is called a multicut if every terminal pair $s_i, t_i$ is disconnected in $G-Z$. In the {\sc Vertex Multicut} problem the objective is to compute a minimum weight multicut. This problem is \NPC and admits a factor-$\O(\log n)$ approximation algorithm in general graphs~\cite{vazirani2013approximation,williamson2011design}.
    We reduce the problem of hitting every cycle in $\calC(X)$ to {\sc Vertex~Multicut}~in~\emph{forests} as follows.
    Let us consider $H = G - X$ with weight function $w$~(restricted to V(H)), and consider the following collection of vertex pairs.
        $$\calP(X) = \{ \{s,t\} \subseteq V(T) \cap N(v) \mid v \in X \text{ and } T \text{ is a tree in } G - X \}$$
    \begin{lemma}\label{lemma:approx}
        Let $S$ be any 1-fvs of $G$, then $V(H) \cap S$ is a multicut for $H$ and $\calP(X)$. Conversely, if $Y$ is any multicut for $H$ and $\calP(X)$, then $X \cup Y$ is a 1-fvs of $G$.
    \end{lemma}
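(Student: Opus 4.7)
The plan is to prove both implications by exploiting a natural correspondence between cycles of $G$ that meet $X$ in exactly one vertex and $s$-$t$ paths inside $H$ for pairs $\{s,t\} \in \calP(X)$. Concretely, if $C$ is a cycle of $G$ with $V(C) \cap X = \{v\}$, then $C - v$ is a path $P$ in $H$ whose endpoints $s,t$ are both neighbors of $v$ lying in a common tree $T$ of the forest $H$, so $\{s,t\} \in \calP(X)$. Conversely, any $\{s,t\} \in \calP(X)$ with witness $v$ and tree $T$ yields such a cycle by closing the unique $s$-$t$ path in $T$ via the edges $vs, vt$. I would first record this correspondence as a single observation and then apply it in each direction.

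For the forward direction, I assume $S$ is a 1-fvs of $G$ and suppose, for contradiction, that $S \cap V(H)$ fails to separate some pair $\{s,t\} \in \calP(X)$ in $H$. Since $H$ is a forest, the unique $s$-$t$ path $P$ in the relevant tree $T$ must then avoid $S \cap V(H)$, and hence avoid $S$ entirely (as $V(P) \subseteq V(H)$). Closing $P$ through the witness $v \in X$ produces a cycle $C$ of $G$ with $V(C) \cap S \subseteq \{v\}$, contradicting the assumption that every cycle meets $S$ in at least two vertices.

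For the reverse direction, let $Y$ be a multicut of $H$ for $\calP(X)$ and let $C$ be any cycle of $G$; I must verify $|V(C) \cap (X \cup Y)| \geq 2$. The case $|V(C) \cap X| \geq 2$ is immediate, and $|V(C) \cap X| = 0$ is impossible since $X$ is an fvs of $G$ and $H$ is a forest. The remaining case $V(C) \cap X = \{v\}$ is handled by the correspondence above: the $v$-avoiding portion of $C$ is an $s$-$t$ path in $H$ for some pair $\{s,t\} \in \calP(X)$, so as $Y$ separates $s$ from $t$ in $H$, this path must contain at least one vertex of $Y$; together with $v \in X$ this yields $|V(C) \cap (X \cup Y)| \geq 2$.

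I do not anticipate a serious obstacle; the only point requiring care is verifying, in the forward direction, that the endpoints $s,t$ produced from the cycle really satisfy the membership condition of $\calP(X)$ (both neighbors of the same $v \in X$ in the same tree of $H$), which is essentially automatic from the structure of $C$ together with the fact that $H$ is a forest.
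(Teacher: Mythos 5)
Your proof is correct and follows essentially the same route as the paper: both directions rest on the same bijection-like correspondence between cycles meeting $X$ in exactly one vertex $v$ and the unique $s$-$t$ paths in the forest $H$ for pairs $\{s,t\} \in \calP(X)$, with your forward direction merely phrased contrapositively where the paper argues directly. No substantive difference.
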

    \begin{proof}
        In the forward direction, consider any pair of vertices $\{s,t\} \in \calP(X)$ that lie in $V(T) \cap N(v)$ for some vertex $v \in X$ and some tree $T$ in $G-X$.
        Then there is a cycle $C$ formed by $v,s,t$ and the path between $s$ and $t$ in $T$, that contains exactly one vertex of $X$. Observe that, as $|S \cap V(C)| \geq 2$, $S$ contains at least one vertex from the path in $T$ between $s$ and $t$. Since this is true for every pair in $\calP(X)$, $V(H) \cap S$ is a multicut.
        
        In the reverse direction, consider any cycle $C$ of $G$, and let us argue that $X \cup Y$ contains at least 2 vertices of $C$.
        If $|V(C) \cap X| \geq 2$, this is clearly true. Otherwise, as $X$ is an fvs of $G$, $V(C) \cap X = \{v\}$. Then, there is a tree $T$ in the forest $G-X$ such that $C$ is contained in the subgraph induced by $V(T) \cup \{v\}$. Let $s,t$ be the neighbors of $v$ in the cycle $C$. Observe that $s,t \in V(T) \cap N(v)$, and hence $\{s,t\} \in \calP(X)$. Further note that the cycle $C$ consists of $v,s,t$ and the path $Q$ in $T$ between $s$ and $t$.
        Since, $Y$ is a multicut for $H$ and $\calP(X)$, there is a vertex $u \in V(Q) \cap Y$. Hence $\{u,v\} \subseteq (X \cup Y) \cap V(C)$.
        Therefore we conclude that $X \cup Y$ is a 1-fvs of $G$.
    \end{proof}
    
    Let $Y^\star$ denote a minimum-weight multicut for $H,w, \calP(X)$. It is clear that $w(Y^\star) \leq \mu^\star$. It only remains to compute one such multicut for $H,w,\calP(X)$. The unweighted instances of {\sc Vertex Multicut} on forests can be solved in polynomial time via a simple greedy algorithm, that roots each tree of the forest at an arbitrary vertex and then iteratively picks the lowest least common ancestor of a terminal pair and deletes it from the graph.
    \begin{proposition}[see e.g.~\cite{guo2006complexity}]
        Unweighted {\sc Vertex Multicut} on forests can be solved optimally in polynomial time.
    \end{proposition}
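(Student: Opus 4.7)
The plan is to formalize and verify the greedy sketched immediately before the proposition. Root each tree of the forest $H$ at an arbitrary vertex; for each terminal pair $\{s_i,t_i\}$ whose endpoints lie in the same tree, let $\ell_i=\mathrm{lca}(s_i,t_i)$. The unique $s_i$-$t_i$ tree path goes up from $s_i$ to $\ell_i$ and down to $t_i$, and a vertex cuts the pair iff it lies on this path. The greedy maintains a set $Z$, initially empty; at each step it considers the surviving forest $H-Z$, picks a pair $\{s,t\}$ still connected in $H-Z$ whose LCA $v$ in $H-Z$ has maximum depth, adds $v$ to $Z$, and iterates until no connected pair remains. Since each round adds a vertex and standard LCA/path queries cost polynomial time, the whole algorithm runs in polynomial time.

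Correctness would be proved by an exchange argument maintaining the invariant that $Z$ extends to an optimal multicut. Suppose after some rounds $Z\subseteq Y^{\star}$ with $Y^{\star}$ optimal, and let $v=\mathrm{lca}(s,t)$ be the next greedy choice, where $\{s,t\}$ is a pair not yet separated by $Z$ and of maximum LCA depth among such pairs. Since $Y^{\star}$ must separate $\{s,t\}$, it contains some vertex $u$ on the $s$-$t$ path. If $u=v$ the invariant is immediate; otherwise $u$ is a strict descendant of $v$. I would show that $Y'=(Y^{\star}\setminus\{u\})\cup\{v\}$ is still a feasible multicut, which in the unweighted setting has the same size and so is again optimal while containing $Z\cup\{v\}$.

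Feasibility of $Y'$ is the main technical step. Fix any pair $\{s',t'\}$ that is cut by $u$ but not by $Y^{\star}\setminus\{u\}$, and set $\ell'=\mathrm{lca}(s',t')$. Because $u$ lies on the $s'$-$t'$ path, either $u=\ell'$ or $u$ is a strict descendant of $\ell'$. If $u=\ell'$ then $\ell'$ has depth strictly greater than $v$, which contradicts the greedy choice of $v$ (note that $\{s',t'\}$ is not cut by $Z$ since it is cut only by $u\in Y^{\star}\setminus Z$). Otherwise $v$ and $\ell'$ are both ancestors of $u$, hence linearly ordered on the root-to-$u$ chain, and three subcases arise: if $v=\ell'$ then $v$ lies on the $s'$-$t'$ path and cuts $\{s',t'\}$; if $v$ is a strict descendant of $\ell'$ and a strict ancestor of $u$, then $v$ lies on the segment from $\ell'$ down to $u$, hence on the $s'$-$t'$ path, and again cuts the pair; finally, if $v$ is a strict ancestor of $\ell'$, then $\ell'$ has depth strictly greater than $v$, once more contradicting the greedy choice. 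In every surviving case $v$ cuts $\{s',t'\}$, so $Y'$ is feasible.

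The hard part is precisely the subcase analysis above: one must use the fact that distinct ancestors of a common vertex in a rooted tree are linearly ordered by the ancestor relation, together with the deepest-LCA selection rule, to rule out the configurations where $v$ fails to replace $u$. Once the exchange step is secured, a straightforward induction on the number of greedy rounds yields that the greedy output has size at most $|Y^{\star}|$, establishing the proposition.
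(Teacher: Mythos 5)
Your proposal is a correct, fully worked-out version of exactly the greedy the paper sketches (root each tree, repeatedly delete the deepest LCA of a still-connected terminal pair); the paper itself gives no proof, deferring to the cited reference. The exchange argument is sound: the key observation that $v$ and $\ell'$ are both ancestors of $u$ and hence comparable, combined with the deepest-LCA rule to exclude $\mathrm{depth}(\ell')>\mathrm{depth}(v)$, correctly establishes feasibility of $(Y^{\star}\setminus\{u\})\cup\{v\}$, and the induction on greedy rounds then yields optimality.
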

    \noindent
    Therefore, in unweighted instances, let $Y$ denote an optimal multicut for $H$ and $\calP(X)$ that can be computed in polynomial time.
    However, weighted {\sc Vertex Multicut} in forests in \NPC. We can easily obtain a reduction from {\sc Vertex Cover}.\footnote{Given an instance $G$ of {\sc Vertex Cover}, construct a \emph{star graph} $H$ with $|V(G)|$ leaves, each associated with a vertex of $G$. Let $r$ denote the center vertex of the star. Each edge $(u,v) \in E(G)$ defines a terminal pair of the {\sc Vertex Multicut} instance, and let $\calP$ denote this collection. Finally, define a weight function $w:V(H) \rightarrow \mathbb{R}^+$, that gives weight $1$ to each leaf and weight $n+1$ to the center vertex $r$. It is easy to see that any vertex cover $S$ of $G$ corresponds to a {\sc Vertex Multicut} of $H$ and $\calP$ of weight $w(S) \leq n$, formed by the leaves of $H$ corresponding to $S$. Conversely, any multicut $S$ of weight at most $n$ of $H$ and $\calP$, gives a vertex cover of cardinality $w(S)$ of $G$.}
    Therefore, we require the following proposition that gives an approximation algorithm for Weighted {\sc Vertex Multicut} in forests.
    \begin{proposition}[\cite{agrawal2020polylogarithmic}]
        Weighted {\sc Vertex Multicut} on forests admits a factor-32 approximation algorithm in polynomial time.
    \end{proposition}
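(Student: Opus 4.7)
The plan is an LP-rounding approach that reduces the problem to weighted hitting set for a family of paths in a tree. Since the input is a forest, each terminal pair lying in the same tree determines a unique path $P_i$, so I would process each tree independently and discard trivially-cut pairs. Within a tree $T$, I would formulate the natural LP with variables $x_v \in [0,1]$ for $v \in V(T)$, constraints $\sum_{v \in P_i} x_v \geq 1$ for every pair, and objective $\min \sum_v w_F(v) x_v$. This LP has polynomial size and can be solved in polynomial time, yielding a fractional solution $x^*$ with $\sum_v w_F(v) x^*_v \leq \mathrm{OPT}$.

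The rounding proceeds in two phases. First, deterministically include every vertex with $x^*_v \geq 1/2$; this contributes at most $2 \cdot \mathrm{LP}$ to the output and cuts every path whose LP mass is concentrated on ``heavy'' vertices. Second, root $T$ arbitrarily, let $a_i = \mathrm{LCA}(s_i,t_i)$ be the apex of $P_i$, and order the remaining uncut paths by non-increasing depth of $a_i$. Process them in this order via a region-growing step: for each still-uncut path $P_i$, walk from $a_i$ along each of the two segments toward $s_i$ and $t_i$, accumulating $w_F(v) \cdot x^*_v$, and include in the solution the first vertex on either segment whose accumulated mass crosses a fixed threshold of order $1/16$. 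Feasibility follows by construction: if a path is not cut by the first phase, then the LP mass along it is at least $1/2$, so region growing must successfully pick at least one of its vertices.

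The main obstacle, and where the constant $32$ must emerge, is the charging analysis for the region-growing phase. A factor of $2$ comes from threshold rounding and a further factor of $2$ from choosing between the two apex-segments; these are easy. The delicate part is showing that region growing itself loses only an additional constant factor: a single vertex may lie on many paths whose grown regions include it, so one must use the apex-depth ordering together with the laminar structure of tree paths to argue that each vertex is charged by only a constant number of distinct regions. A natural alternative is a primal-dual scheme raising dual weights on uncut pairs uniformly until some vertex constraint $\sum_{i : v \in P_i} y_i = w_F(v)/c$ becomes tight and then adding $v$ to the cut. The primal-dual analysis may streamline the accounting, but it faces the same core difficulty of bounding the number of times each vertex's weight is paid for by distinct dual variables — the tree structure must be used to prove this bound is $O(1)$ rather than the $O(\log n)$ one would get for general graphs.
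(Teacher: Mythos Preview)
The paper does not prove this proposition; it is quoted from~\cite{agrawal2020polylogarithmic} and used as a black box, with the remark that the cited algorithm in fact applies to the larger class of chordal graphs. There is therefore no argument in the present paper to compare your sketch against.

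Independently of that, your proposal has a genuine gap. The region-growing step is not coherently specified: you accumulate $w_F(v)\,x^*_v$ (LP \emph{cost}) along each segment and stop once this sum crosses $1/16$, yet you argue feasibility from the LP \emph{mass} $\sum_{v\in P_i} x^*_v \ge 1$. These are different quantities, and the latter gives no lower bound on the former along either segment, so the walk need not ever reach the threshold; conversely, even if it does, nothing bounds the weight of the vertex you pick in terms of the accumulated cost. More importantly, the step you yourself flag as ``the delicate part''---bounding how many distinct regions can charge a single vertex---is never carried out. Your appeal to ``the laminar structure of tree paths'' is misplaced: terminal-pair paths in a tree are \emph{not} laminar (two paths can share an internal subpath with neither containing the other, and arbitrarily many paths can pass through a fixed vertex), and processing by apex depth does not by itself prevent repeated charging. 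Without a concrete charging scheme, the sketch does not establish any constant approximation factor, let alone~$32$; the primal-dual alternative you mention faces exactly the same obstacle.
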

    We remark that the above approximation algorithm is actually for the larger class of Chordal graphs. Let $Y$ denote the multicut to $(H,w,\calP(X))$ output by the above approximation algorithm. It is clear that $w(Y) \leq 32\mu^\star$.
    
    Having computed $X$ and $Y$ we output $S = X \cup Y$ as the required solution. It follows from Lemma~\ref{lemma:approx} that $X \cup Y$ is a $1$-fvs of $G$, and it follows from the above discussion that $w(X \cup Y) \leq 34 \mu^\star$ in weighted instances of FT-FVS. When the instance is unweighted, $|X \cup Y| \leq 3 \mu^\star$. This concludes the proof of Theorem~\ref{thm:FT-FVS-Approx}.  

\section{FPT Algorithm}
In this section we present an FPT algorithm for FT-FVS parameterized by the solution size, that is denoted by $k$. Our input is a graph $G$ on $n$ vertices and an integer $k$, and our objective is to compute a 1-fvs of $G$ of cardinality at most $k$, if one exists, in time $f(k) n^{\O(1)}$. Here $f$ is some function of $k$ alone. To present our algorithm, it is helpful to consider a slightly more general problem defined below.

\paradefn{Fault Tolerant FVS Extension (FT-FVS-Ext)}{Graph $G$, an integer $k$ and a vertex subset $S \subseteq V(G)$.}
{k}{Is there a 1-fvs $X$ of $G$ such that $|X| \leq k$ and $S \subseteq X$?}

\noindent
Note that, when $S=\emptyset$, {\sc FT-FVS-Ext} is just FT-FVS.
Let us introduce a few terms to simplify the presentation of our algorithm. We call a vertex subset $X \subseteq V(G)$ a \emph{solution} to $(G,S,k)$ of {\sc FT-FVS-Ext}, if $X$ is a 1-fvs of cardinality at most $k$ that is also a superset of $S$. We call $(G,S,k)$ a \emph{\No-instance} if there is no solution, i.e. there is no 1-fvs of cardinality $k$ that is a superset of $S$; otherwise it is a \emph{\Yes-instance}. 
We present a branching algorithm for this problem, where we use the vertex subset $S$ to keep track of the current partial solution. 
We remark that, in the normal setting, i.e. for FVS, we could have simply deleted $S$ from the graph and then recursively solved the problem on the remaining subgraph. However, this would not work for FT-FVS since, upon deleting the  $S$, we also lose all those cycles that contain exactly one vertex of the current partial solution $S$. Hence, we must retain all vertices of $S$ until we a complete solution is obtained.

Let $(G,S,k)$ denote an instance of {\sc FT-FVS-Ext}.
We begin by presenting a set of reduction rules that simply this instance and make it amenable to branching. 

\begin{reduction}\label{rr:trivial}
    Let $(G,S,k)$ be an instance of {\sc FT-FVS-Ext}.
    If $|S| > k$, then output that this is a \No-instance.
    Othereise, if $S$ is a 1-fvs of $G$ then output that this is a \Yes-instance.
\end{reduction}

\begin{reduction}\label{rr:2-cycle}
    Let $(G,S,k)$ be an instance of {\sc FT-FVS-Ext}.
    If there is a pair of vertices $u,v \in V(G)$ with at least 2 parallel edges between them, add $u,v$ to $S$, and remove all but one edge between them.
\end{reduction}

\begin{reduction}\label{rr:acyclic-vertex}
    Let $(G,S,k)$ be an instance of {\sc FT-FVS-Ext}.
    If $v \in V(G) \setminus S$ is not part of any cycle, then delete $v$ from $G$. 
\end{reduction}

\begin{reduction}\label{rr:deg-2}
    Let $(G,S,k)$ be an instance of {\sc FT-FVS-Ext}.
    Let $v \in V(G) \setminus S$ be a vertex of degree $2$ in $G$, with neighbors $u$ and $w$.
    Then, delete $v$ from $G$ and add an edge $(u,w)$ to $G$.  
\end{reduction}

It is clear that Reduction Rules~\ref{rr:trivial}, \ref{rr:2-cycle}, \ref{rr:acyclic-vertex} and \ref{rr:deg-2} are safe, and can be applied in polynomial time. The next reduction rules further simplifies the graph $G$ using the partial solution $S$.

\begin{reduction}\label{rr:S-leaf}
    Let $(G,S,k)$ be an instance of {\sc FT-FVS-Ext} Reduction Rules~\ref{rr:2-cycle}, \ref{rr:acyclic-vertex} and \ref{rr:deg-2} are 
    not applicable.
    \begin{itemize}
        \item If there is a vertex $v \in V(G) \setminus S$ such that $N_G(v) \subseteq S$, then delete $v$ from $G$ and add an edge between each pair of vertices in $N_G(v)$.
        \item If there is a vertex $v \in V(G) \setminus S$ such that $N_G(v) \setminus S = \{u\}$, then contract the edge $(u,v)$.
    \end{itemize}
\end{reduction}
\begin{lemma}
    Reduction Rule~\ref{rr:S-leaf} is safe and applicable in polynomial time.
\end{lemma}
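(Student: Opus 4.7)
The plan is to handle the two items of Reduction Rule \ref{rr:S-leaf} separately, proving in each case that the reduced instance is equivalent to the original. Polynomial-time implementability is immediate, since for each candidate $v \in V(G) \setminus S$ we can test the hypothesis in linear time and perform the deletion or contraction together with $O(|N(v)|^2)$ edge insertions.

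For item 1, where $N_G(v) \subseteq S$, the key observation enabled by Reduction Rule \ref{rr:2-cycle} is that any cycle $C$ of $G$ through $v$ enters and leaves via two distinct neighbors, both of which lie in $S$. This yields an exchange lemma: if $X$ is a 1-fvs of $G$ containing $S$ and $v \in X$, then $X \setminus \{v\}$ is still a 1-fvs, because every cycle through $v$ already contains two vertices of $S \subseteq X \setminus \{v\}$, while cycles avoiding $v$ are unaffected. So we may assume $v \notin X$. In the forward direction $X$ remains a 1-fvs of the reduced graph $G'$: an arbitrary cycle $C'$ of $G'$ either uses no new edge, in which case $C'$ is a cycle of $G - v \subseteq G$ hit twice by $X$, or uses some new edge $(a,b)$, whose endpoints $a,b \in N(v) \subseteq S \subseteq X$ already witness $|V(C') \cap X| \geq 2$. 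The reverse direction is symmetric: a cycle of $G$ avoiding $v$ survives verbatim in $G'$, and a cycle of $G$ through $v$ is hit twice by its two $v$-neighbors, which lie in $S$.

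For item 2, where $N_G(v) \setminus S = \{u\}$, the analogous exchange shows we may assume $v \notin X$: if $v \in X$ then $(X \setminus \{v\}) \cup \{u\}$ is still a 1-fvs, since the two distinct $v$-neighbors on any cycle through $v$ are each either $u$ or a vertex of $S$. Writing $G'$ for the contraction of $(u,v)$ and identifying the new vertex with $u$, the forward direction proceeds case-by-case on how a simple cycle $C'$ of $G'$ interacts with the newly added edges $\{(u,a) : a \in N(v) \setminus \{u\}\}$, all of which are incident to $u$, so $C'$ uses zero, one, or two of them. In the zero and two cases, $C'$ is either a cycle of $G - v$ or has both of its non-$u$ endpoints in $S \subseteq X$; in the one case a cycle $C' = u - a - P - u$ lifts to the cycle $u - v - a - P - u$ of $G$, so the two hits of $X$ lie on this lifted cycle and, since $v \notin X$, already on $V(C')$. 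The reverse direction mirrors this: a cycle of $G$ through $v$ whose two $v$-neighbors both lie in $S$ is handled directly, and otherwise one $v$-neighbor equals $u$ and the other is some $a \in S$, giving a cycle of $G'$ through the new edge $(u, a)$.

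The only subtlety is that both operations can create parallel edges whose endpoints lie in $S$; any resulting 2-cycle is hit twice by any $X \supseteq S$, and a subsequent application of Reduction Rule \ref{rr:2-cycle} restores the simple-graph invariant for the next round.
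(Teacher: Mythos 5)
Your proof is correct and follows essentially the same route as the paper: both rest on the observations that every cycle through a vertex $v$ with $N_G(v)\subseteq S$ already meets $S$ twice (so $v$ may be dropped from any solution), and that in the second case $v$ may be exchanged for $u$, after which the deletion or contraction preserves solutions in both directions; your cycle-by-cycle case analysis merely spells out what the paper dismisses as ``easy to see.'' One tiny imprecision in your closing remark: in the contraction case the new parallel edges join $u$ to a vertex of $S$, so the resulting $2$-cycle is \emph{not} hit twice by $S$ alone, but your one-new-edge case already handles it correctly (it lifts to the triangle $u,v,a$ in $G$), so nothing is lost.
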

\begin{proof} 
    When Reduction Rules~\ref{rr:2-cycle}, \ref{rr:acyclic-vertex} and \ref{rr:deg-2} are not applicable, then $G$ is a simple graph where every vertex lies in some cycle.
    Consider the first case, i.e. there is a vertex $v \in V(G) \setminus S$ such that $N_G(v) \subseteq S$. Then observe that any cycle that contains $v$ also contains at least 2 vertices of $S$. Let $G'$ denote the graph obtained by deleting $v$ and adding an edge between each pair of vertices in $N_G(v) \subseteq S$. It is clear that any solution to the instance $(G',S,k)$ is a solution to $(G,S,k)$, and vice-versa.
    
    Next, consider the second case, i.e. $N_G(v) \setminus S = \{u\}$. Note that Reduction Rule~\ref{rr:2-cycle} is not applicable to $(G,S,k)$, and therefore there are no parallel edges between $u$ any of it's neighbors in $G$. Hence any cycle $C$ containing $v$ must either contain at least 2 neighbors of $v$ from $S$, or contain $u$ and a neighbor from $S$. The first type of cycles are always covered by any solution. And the second type of cycles must contain the edge $(u,v)$. Therefore, if $X$ is a solution to $(G,S,k)$ that contains $v$, then $X' = (X \setminus \{v\}) \cup \{u\}$ is also solution to $(G,S,k)$.     
    Then it is easy to see that $(G,S,k)$ has a solution if and only if $(G/(u,v),S,k)$ has a solution.
\end{proof}
We remark that an application of Reduction Rule~\ref{rr:deg-2} and ~\ref{rr:S-leaf} may create new parallel edges in $G$, which are then eliminated by Reduction Rule~\ref{rr:2-cycle}. The following lemma enumerates some useful properties of the instance when none of the above reduction rules are applicable.

\begin{lemma} \label{lemma:rr-prop}
    If Reduction Rules~\ref{rr:trivial}, \ref{rr:2-cycle}, \ref{rr:acyclic-vertex}, \ref{rr:deg-2} and ~\ref{rr:S-leaf} are not applicable to an instance $(G,S,k)$, then we have the following properties:
    \begin{itemize}
        \item $G$ is a simple graph such that every vertex of $V(G) \setminus S$ lies in some cycle, 
        \item the minimum degree in $G$ of any vertex in $V(G) \setminus S$ is at least $3$, 
        \item $|S| \leq k$ and it is not a 1-fvs of $G$.
        \item and the graph $G_S = G - S$ has minimum degree at least $2$, and it contains a cycle.
    \end{itemize}
\end{lemma}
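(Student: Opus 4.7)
The plan is to verify each of the four bulleted properties in turn by tracing back to the specific reduction rule whose non-applicability forces it. Most items follow almost immediately from the contrapositive of a single rule, so the proof is essentially a careful bookkeeping exercise; the one place that requires a bit of thought is showing that $G_S = G-S$ actually contains a cycle.

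First I would dispose of property (a). Simplicity of $G$ is forced by the non-applicability of Reduction Rule~\ref{rr:2-cycle}: any pair of parallel edges would have been collapsed. The statement that every vertex of $V(G)\setminus S$ lies in a cycle is exactly the contrapositive of Reduction Rule~\ref{rr:acyclic-vertex}. For property (b), observe that a vertex $v\in V(G)\setminus S$ of degree $0$ or $1$ cannot lie on any cycle, so such a vertex would trigger Reduction Rule~\ref{rr:acyclic-vertex}; a vertex of degree exactly $2$ would trigger Reduction Rule~\ref{rr:deg-2}. Hence every vertex of $V(G)\setminus S$ has degree at least $3$ in $G$. Property (c) is immediate from Reduction Rule~\ref{rr:trivial}: the first clause being inapplicable gives $|S|\le k$, and the second clause being inapplicable gives that $S$ is not a $1$-fvs of $G$.

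For property (d), I would first argue the minimum degree claim in $G_S$ and then deduce the existence of a cycle. Pick any $v \in V(G_S) = V(G)\setminus S$. If $v$ has no neighbour outside $S$, i.e.\ $N_G(v)\subseteq S$, then the first bullet of Reduction Rule~\ref{rr:S-leaf} applies, a contradiction; if $v$ has exactly one neighbour $u$ outside $S$, then the second bullet of Reduction Rule~\ref{rr:S-leaf} applies, again a contradiction. So $|N_G(v)\setminus S|\ge 2$, which is precisely the degree of $v$ in $G_S$. Hence $G_S$ has minimum degree at least $2$.

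The remaining task is to show $G_S$ contains a cycle. By property (c), $S$ is not a $1$-fvs, so in particular $S\neq V(G)$ (otherwise $G-(S\setminus\{v\})$ is a single vertex for every $v$, which is trivially a forest). Therefore $V(G_S)$ is non-empty, and combined with the minimum degree bound just established, $G_S$ is a non-empty graph of minimum degree at least $2$; a standard observation (a forest on at least one vertex has a leaf of degree $\le 1$) then gives the existence of a cycle in $G_S$. The only subtlety in the whole argument, and the one place I would be careful, is precisely this non-emptiness step—everything else is a direct translation of the failure of a reduction rule into the stated structural property.
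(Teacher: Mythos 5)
Your proof is correct and follows essentially the same route as the paper: each property is read off as the contrapositive of the corresponding reduction rule, and the cycle in $G_S$ is obtained from non-emptiness plus minimum degree $\geq 2$. If anything, you are slightly more explicit than the paper in two spots (handling the degree-$0$ case in $G_S$ via the first bullet of Reduction Rule~\ref{rr:S-leaf}, and deriving $V(G_S)\neq\emptyset$ from ``$S$ is not a $1$-fvs''), but the argument is the same.
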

\begin{proof}
    The first three properties are easily follow from the reduction rules.
    Observe that if $G$ has parallel edges between two vertices, then Reduction Rule~\ref{rr:2-cycle} is applicable. Similarly, if there is vertex of degree 0,1 or 2 in $V(G) \setminus S$, then Reduction Rules ~\ref{rr:acyclic-vertex} or \ref{rr:deg-2} are applicable.
    Furthermore, if some vertex in $V(G) \setminus S$ is not part of any cycle, then Reduction Rule~\ref{rr:acyclic-vertex} is applicable. And finally, if Reduction Rules~\ref{rr:trivial} is not applicable then $|S| \leq k$ and $S$ is not a 1-fvs of $G$.
    
    For the fourth property, consider the graph $G_S$. If this is an empty graph, then clearly every cycle of $G$ contains 2 vertices from $S$; i.e. Reduction Rule~\ref{rr:trivial} is applicable. Otherwise, if $v \in V(G_S)$ is a vertex of degree 1 in $G_S$, then Reduction Rule~\ref{rr:S-leaf} applies. Hence, if these reduction rules are not applicable then the minimum degree of $G_S$ is at least $2$. Then we can conclude that $G_S$ contains a cycle, otherwise $G_S$ is a forest and there must be a vertex of degree 1 in it. 
\end{proof}

\begin{lemma}\label{lemma:bdd-cycle}
    Let $(G,S,k)$ be an instance such that Reduction Rules~\ref{rr:trivial}, \ref{rr:2-cycle}, \ref{rr:acyclic-vertex}, \ref{rr:deg-2} and ~\ref{rr:S-leaf} are not applicable to it. 
    Then in polynomial time we can conclude that
        either $(G,S,k)$ is a \No-instance,
        or $S$ contains a cycle of length at most $2(2k-1)|S| \log n$ where $n = |V(G)|$.
\end{lemma}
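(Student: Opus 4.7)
My plan is to exhibit, in polynomial time, a cycle of $G$ of length at most $2(2k-1)|S|\log n$ that meets $S$ in at most one vertex; I read the statement this way because such a short ``$S$-simple'' cycle is exactly the witness that $S$ is not yet a 1-fvs and is what the subsequent branching step will consume. The argument uses only the structural consequences of Lemma~\ref{lemma:rr-prop}: $G$ has minimum degree $\geq 3$ on $V(G)\setminus S$, and $G_S = G-S$ has minimum degree $\geq 2$ and contains a cycle.

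First I would form the auxiliary multigraph $H$ by iteratively \emph{suppressing} every degree-$2$ vertex of $G_S$, so that each edge $e$ of $H$ encodes a path $P_e$ in $G_S$ of some length $\ell_e$ whose internal vertices all have degree $2$ in $G_S$. Every vertex of $H$ then has degree $\geq 3$, unless $H$ is empty, in which case $G_S$ is a disjoint union of cycles and I apply the pigeonhole step below directly to one such component cycle. From any vertex of $H$ the classical BFS-doubling argument (using min degree $3$) finds in polynomial time a cycle $C_H$ of length at most $2\log_2|V(H)|+O(1)\leq 2\log_2 n + O(1)$. Lifting $C_H$ by replacing each edge $e\in E(C_H)$ with the path $P_e$ yields a genuine simple cycle $C_G$ in $G_S\subseteq G$ of length $\sum_{e\in E(C_H)}\ell_e$, because the internal vertices of distinct $P_e$'s are pairwise disjoint by construction.

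If every $\ell_e$ along $C_H$ is at most $2|S|$, then $|C_G|\leq (2\log_2 n + O(1))\cdot 2|S| \leq 2(2k-1)|S|\log n$ using $|S|\leq k$ and absorbing constants into the $\log n$ term, and I output $C_G$, which meets $S$ in zero vertices. Otherwise some $P_{e^\star}$ has more than $2|S|$ internal degree-$2$ vertices; each such vertex has degree exactly $2$ in $G_S$ but $\geq 3$ in $G$, hence at least one neighbor in $S$. Distributing these $\geq \ell_{e^\star}$ $S$-incidences among only $|S|$ vertices of $S$ by pigeonhole gives some $s^\star\in S$ with at least $\ell_{e^\star}/|S|\geq 2$ neighbors on $P_{e^\star}$, so two consecutive such neighbors lie at path-distance at most $|S|$ on $P_{e^\star}$; together with the length-$2$ detour through $s^\star$ they form a cycle of length at most $|S|+2$ meeting $S$ in exactly one vertex, comfortably within the bound. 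The main obstacle I anticipate is calibrating the constants cleanly so that both sub-cases fit uniformly under $2(2k-1)|S|\log n$, and verifying that the parallel edges in $H$ that suppression can introduce still give genuine simple cycles upon lifting; the core ideas (suppression of degree-$2$ vertices plus pigeonhole through $S$) are standard, and the remaining work is bookkeeping. In the degenerate corner case where none of the above fires, Reduction Rules~\ref{rr:trivial}--\ref{rr:acyclic-vertex} would already have resolved the instance, so we can safely return the \No-instance verdict.
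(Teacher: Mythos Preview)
Your overall skeleton---suppress degree-$2$ vertices of $G_S$, find a short cycle in the resulting min-degree-$3$ graph, and treat long suppressed paths separately via pigeonhole into $S$---is exactly the paper's strategy. The real divergence is in how the ``long path'' case is resolved, and this is worth spelling out.

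The paper's intended statement (the ``$S$ contains a cycle'' in the lemma is a typo for ``$G_S$ contains a cycle''; see its use in Theorem~\ref{thm:fvs-fpt}) is that one either obtains a short cycle \emph{disjoint from $S$} or declares a \No-instance. When a path of degree-$2$ vertices in $G_S$ is long (threshold $(2k-1)|S|$), the paper does \emph{not} produce a cycle at all: instead it argues that some $v\in S$ has $\geq 2k$ neighbours on that path, yielding $k$ cycles through $v$ that are pairwise disjoint outside $v$ and each meet $S$ only in $v$; hence any 1-fvs needs $k+1$ vertices and the instance is \No. This is where the factor $(2k-1)$ in the bound originates.

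Your treatment instead lowers the threshold to $2|S|$, finds two nearby neighbours of some $s^\star\in S$ on the long path, and outputs a short cycle \emph{through $s^\star$}. This cycle is not in $G_S$, so it does not prove the lemma as the paper states and uses it. Your explicit reinterpretation (a cycle meeting $S$ in at most one vertex) is nonetheless sufficient for the downstream branching, since $S\subseteq X$ forces one more vertex of $C\setminus\{s^\star\}$ into any solution. So what you have is a valid alternative lemma, not a proof of the paper's lemma; and your final sentence about a ``degenerate corner case'' returning \No\ is vacuous, since your two cases are already exhaustive and you never genuinely invoke the \No\ branch.

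Two small quantitative slips: with $m\geq \ell/|S|$ neighbours on a path of $\ell$ internal vertices, the minimum gap is at most $(\ell-1)/(m-1)$, which is $\lesssim 2|S|$ rather than $|S|$ at your threshold $\ell>2|S|$; and in case~5 your bound $4|S|\log n$ only sits under $2(2k-1)|S|\log n$ for $k\geq 2$. Both are harmless for the eventual $k^{\O(k)}n^{\O(1)}$ running time.
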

\begin{proof}
    By Lemma~\ref{lemma:rr-prop}, the minimum degree of $G_S$ is at least 2, and let $T = \{ u \in V(G_S) \mid \deg_{G_S}(u) = 2\}$.
    Observe that any vertex $u \in T$ must have at least one neighbor in $S$ (in the graph $G$); otherwise it is a vertex of degree 2 in $G$ from $V(G) \setminus S$, and Reduction Rule~\ref{rr:deg-2} is applicable.
    We claim that if the induced subgraph $G_S[T]$ has a path of length at least $(2k - 1)|S| + 1$, then $(G,S,k)$ is a \No-instance. Towards this, consider a $P$ be a path in $G_S[T]$ of length at least $(2k-1)|S| + 1$. Since every vertex of $P$ has a neighbor in $S$ (in the graph $G$), there must be a vertex $v \in S$ with at least $2k$ neighbors in $P$. Therefore, there is a collection of $k$ cycles in $G$, each containing $v$, such that they contain no other vertex from $S$, and they are pairwise vertex disjoint except for the vertex $v$. Then it follows that any 1-fvs of $G$ must contain at least $k+1$ vertices, and hence $(G,S,k)$ is a \No-instance.
    
    Otherwise, every path in $G_S[T]$ is of length at most $(2k-1)|S|$. Further, the maximum degree of $G_S[T]$ is at most 2, and hence it union of vertex disjoint paths and cycles. If $G_S[T]$ contains a cycle, then it is of length at most $(2k-1)|S|$, as required by the lemma. Otherwise, $G_S[T]$ is a collection of vertex disjoint paths, each of length at most $2(2k-1)|S|$.
    Let $P$ be a maximal path in $G_S[T]$ and note that only the endpoints of the path have a neighbor in $G_S$, exactly one each. If there is a path $P$ in $G_S[T]$ whose both endpoints are adjacent to the same vertex in $G_S$, then we have a cycle of length $(2k-1)|S| + 1$, as required by the lemma. Otherwise, the two endpoints of every maximal path $P$ in $G_S[T]$ are adjacent to two distinct vertices. 
    
    Consider the graph $G'$ obtained from $G_S$ contracting each edge incident on a vertex in $T$. Observe that each edge of $G'$ is either an edge in $G_S$, or it corresponds to a distinct maximal path in $G_S[T]$ (that has been contracted to this edge).
    It is clear that $G'$ is a graph of minimum degree 3, and therefore $G'$ contains a cycle of length at most $2 \log |V(G')| \leq 2 \log n$ ~\cite{diestel2005}[Cor.~1.3.5]. Let $C'$ be such a cycle in $G'$, and note that it can be constructed by a BFS-traversal of $G'$. Now consider the cycle $C$ in $G_S$ obtained from it by replacing each edge of $C'$ that is not an edge in $G_S$ by the corresponding path in $G_S[Z]$. 
    The cycle $C$ has length at most $2 \log n \cdot (2k - 1)|S|$ in $G_S$, as required by the lemma.    
\end{proof}

Now we are ready to show that {\sc FT-FVS-Ext} admits an FPT algorithm.
\begin{theorem}\label{thm:fvs-fpt}
    Let $(G,S,k)$ be an instance of {\sc FT-FVS-Ext}.
    Then in time $k^{\O(k)}n^{\O(1)}$ we can compute a solution to this instance, or conclude that no such solution exists.
\end{theorem}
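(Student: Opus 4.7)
The plan is a bounded-depth branching algorithm on the triple $(G,S,k)$. At every recursion node I would first apply Reduction Rules~\ref{rr:trivial}--\ref{rr:S-leaf} exhaustively in polynomial time; if Rule~\ref{rr:trivial} fires, the instance is resolved on the spot. Otherwise Lemma~\ref{lemma:rr-prop} guarantees that $|S|\le k$, $S$ is not yet a $1$-fvs of $G$, and $G_S=G-S$ has minimum degree at least $2$, so Lemma~\ref{lemma:bdd-cycle} applies and either certifies a \No-instance or returns a cycle $C$ in $G_S$ of length at most $\ell:=2(2k-1)|S|\log n\le 4k^2\log n$.

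The branching step exploits the fact that $V(C)\subseteq V(G)\setminus S$, yet any $1$-fvs of $G$ must contain at least two vertices of $C$. I would therefore branch on all unordered pairs $\{u,v\}\subseteq V(C)$, recursing on $(G,S\cup\{u,v\},k)$; the original instance is a \Yes-instance if and only if at least one child is, since a solution $X\supseteq S$ must contain some pair from $V(C)$ and conversely any solution to a child instance is a solution to the parent. Each recursive call grows $S$ by (at least) two, and Rule~\ref{rr:trivial} cuts off once $|S|>k$, so the recursion tree has depth at most $k/2$ and branching factor $\binom{\ell}{2}\le \ell^2$ at each node, giving at most $\ell^k$ leaves.

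The main technical nuisance is absorbing the resulting bound $(4k^2\log n)^k = k^{O(k)}(\log n)^k$ into the promised $k^{O(k)} n^{O(1)}$. I would dispatch this with a case split on $L:=\log n$. If $L\le 4k\log k$, then $(\log n)^k\le(4k\log k)^k = 2^{O(k\log k)} = k^{O(k)}$. Otherwise $L>4k\log k$ forces $\log L\le 2\log k$ (for $k$ above a small constant), whence $k\log L\le 2k\log k\le L/2$, and therefore $(\log n)^k = 2^{k\log L}\le 2^{L/2}\le n$. In either regime $(\log n)^k\le k^{O(k)}\cdot n^{O(1)}$, and since each recursion node does polynomial work the total running time is $k^{O(k)}\cdot n^{O(1)}$, as required. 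The conceptual heart of the argument is the branching on a cycle that is guaranteed to lie entirely outside $S$; the $(\log n)^k$ bookkeeping is what needs the most care at the write-up stage.
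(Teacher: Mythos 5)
Your proposal is correct and follows essentially the same route as the paper: reduce exhaustively, invoke Lemma~\ref{lemma:bdd-cycle} to obtain a cycle of length at most $4k^2\log n$ disjoint from $S$, and branch over its vertices with recursion depth bounded via $|S|\le k$; branching on unordered pairs rather than single vertices (as the paper does) is an immaterial variation. One small slip in your otherwise more careful running-time bookkeeping: the implication ``$L>4k\log k$ forces $\log L\le 2\log k$'' is false (take $k$ fixed and $L$ enormous); the correct route in that case is that $L/\log L$ is increasing and $L>4k\log k$ gives $L/\log L\ge k$, hence $k\log L\le L$ and $(\log n)^k\le n$, so your final bound $k^{\O(k)}n^{\O(1)}$ still stands.
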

\begin{proof}
    Given an instance $(G,S,k)$, we apply Reduction Rules~\ref{rr:trivial}, \ref{rr:2-cycle}, \ref{rr:acyclic-vertex}, \ref{rr:deg-2} and \ref{rr:S-leaf} exhaustively. Then, either we conclude that it is a \No-instance, or it is a \Yes-instance (with solution $S$), or else we have an instance where none of these reduction rules are applicable.  Then by Lemma~\ref{lemma:bdd-cycle}, either we conclude that it is a \No-instance, or we compute a cycle $C$ in $G_S$ of length at most $2(2k-1)|S|\log n \leq 4k^2 \log n$ (as $|S| \leq k$).
    Since, $C$ is disjoint from $S$, at least 2 vertices of $C$ must be in any solution to this instance. We branch on the choice of one of these vertices. For each choice $v \in V(C)$, we recursively solve the instance $(G,S\cup{v},k)$. If any one of these recursive calls return a solution, then we output that as our solution. Otherwise we output that this is a \No-instance. The correctness of this algorithm easily follows from the Reduction Rules, Lemma~\ref{lemma:rr-prop} and \ref{lemma:bdd-cycle}.
    
    To bound the running time of this algorithm, observe that we have a branching algorithm, where at each node of the branching-tree we have at most $4k^2 \log n$ child nodes, one for each recursive call. The height of this branching-tree is upperbounded by $k - |S|$, since in each recursive call the cardinality of $S$ increases by 1. Therefore the running time is upper-bounded by $(4k^2 \log n)^{k - |S|} n^{\O(1)}$ which is upperbounded by $k^{\O(k)}n^{\O(1)}$ (see e.g.~\cite{cygan2015parameterized}).    
\end{proof}
As a corollary it follows that FT-FVS admits an FPT-algorithm in time $k^{\O(k)}n^{\O(1)}$.


\section{Conclusion}\label{sec:conclusion}
In this paper we combine the notion of fault tolerance with the study of the various computational problems in \NP, and raise the question of computing fault tolerant solutions. As a concrete example, we study the {\sc Fault Tolerant Feedback Vertex Set} and obtain the following results:
\begin{itemize}
    \item A factor-34 approximation algorithm, that improves to factor-3 in the unweighted version.
    \item An FPT-algorithm parameterized by the solution size $k$, that runs in time $k^{\O(k)}n^{\O(1)}$, where $n$ is the size of the input graph.
\end{itemize}
Two immediate follow-up questions are:
\begin{itemize}
    \item Is there a factor-2 approximation algorithm for FT-FVS? 
    \item Is there an FPT-algorithm for FT-FVS that runs in time $c^k n^{\O(1)}$, where $c$ is some fixed constant?
\end{itemize}
Next arises the more general problem of {\sc $d$-Fault Tolerant Feedback Vertex Set} for $d \geq 2$, whose complexity is unknown. We remark that the algorithms presented here for FT-FVS crucially depend on the fact that $d=1$. For example, the reduction to {\sc Vertex Multicut} in the approximation algorithm for FT-FVS does not work for $d \geq 2$. If one were to follow the same approach for $d \geq 2$, we end up with a version of {\sc Fault Tolerant Multicut}, where each pair of terminals $(s_i,t_i)$ has a fault-tolerance demand $d_i \leq d$. This version of {\sc Fault Tolerant Multicut} is a very interesting open question in its own right.\footnote{Some partial results on this question are presented in ~\cite{dean2011approximation} and ~\cite{Batra0K18}.}
 
More generally, the complexity of the fault-tolerant versions of many other problems such as {\sc Subset-FVS}, {\sc Odd Cycle Transversal}, {\sc Group Feedback Vertex Set}, {\sc Directed Feedback Vertex/Arc Set}, {\sc (Directed) Multicut} and {\sc Multiway Cut}, {\sc Bisection}, {\sc Cluster Editing} and so on are interesting open questions for future research.

\bibliographystyle{plain}
\bibliography{ref}

\end{document}